  \setlist[enumerate,1]{leftmargin=25pt}
  \setlist[itemize,1]{leftmargin=15pt}
  \setlist[description,1]{leftmargin=15pt}
\newenvironment{Blue}{\noindent\color{blue}}{}
\newcommand\blue[1]{\begin{Blue}#1\end{Blue}}
\newenvironment{Red}{\noindent\color{red}}{}
\newtheorem{lemma}{Lemma}
\theoremstyle{definition}
\newcommand\Ar{\medskip\noindent\textbf{A:\ }}
\newcommand\bra[1]{\ensuremath{\langle#1|}}
\newcommand\braket[2]{\ensuremath{\langle#1\,|\,#2\rangle}}
\newcommand\C{\ensuremath{\mathbb C}}
\newcommand\Dom[1]{\ensuremath{\text{Dom}(#1)}}
\renewcommand\H{\ensuremath{\mathcal H}}
\newcommand{\ket}[1]{\ensuremath{|#1\rangle}}
\newcommand{\ketbra}[2]{\ensuremath{|#1\rangle\langle#2|}}
\renewcommand\phi{\varphi}
\newcommand\qef{\hfill$\triangleleft$} 
\newcommand{\Qn}{\medskip\noindent\textbf{Q:\ }}
\newcommand{\R}{\ensuremath{\mathbb R}}
\newcommand\Sec[1]{\section{\large #1}}
\renewcommand{\v}{\ensuremath{\vec{v}}}
\newcommand\Val[1]{\ensuremath{\text{Val}\big(#1\big)}}
\newcommand\val[1]{\ensuremath{\text{Val}(#1)}}
\title{What are kets?}
\author{Yuri Gurevich\\
\normalsize Computer Science \& Engineering\\
\normalsize University of Michigan
\and
Andreas Blass\\
\normalsize Mathematics\\
\normalsize University of Michigan}
\begin{document}
\newpage
\date{}
\maketitle
\thispagestyle{empty}

\begin{abstract}
According to Dirac's bra-ket notation, in an inner-product space, the inner product \braket xy of vectors $x,y$ can be viewed as an application of the bra \bra x to the ket \ket y.
Here \bra x is the linear functional $\ket y \mapsto \braket xy$ and \ket y is the vector $y$.
But often --- though not always --- there are advantages in seeing
\ket y as the function $a \mapsto a\cdot y$ where $a$ ranges over the
scalars. For example, the outer product \ketbra yx becomes simply the
composition $\ket y \circ \bra x$.
It would be most convenient to view kets sometimes as vectors and sometimes as functions, depending on the context. This turns out to be possible.

While the bra-ket notation arose in quantum mechanics, this note presupposes no familiarity with quantum mechanics.
\end{abstract}

\Sec{The question}\label{s:qn}

\noindent\textbf{Q}\footnote{Quisani is a former student of the first author.}:
Gentlemen, I have a question for you.
But first I need to motivate it and explain where I am coming from.

The question is related to the so-called inner-product spaces which are vector spaces over the field \R\ of real numbers or the field \C\ of complex numbers, furnished with inner product \braket xy,  also known as scalar product.
Euclidean spaces and (more generally) Hilbert spaces are the most familiar examples of inner-product spaces.
I'll stick to the case of \C\ which is of greater interest to me.

Let \H\ be an inner-product space over \C, and let $x,y$ range over the vectors in \H.
A vector $x$ gives rise to the linear functional $y \mapsto \braket xy$ that maps any vector $y$ to the scalar \braket xy.
This linear functional is called a \emph{bra} and denoted \bra x in Dirac's bra-ket notation, introduced by Paul Dirac \cite{Dirac}.  The vector $y$ is called a \emph{ket} and denoted \ket y.
Thus the inner product \braket xy can be viewed as the application $\bra x\, \ket y = \bra x (\ket y)$ of the bra \bra x to the ket \ket y.

By the way a side question occurs to me.
If kets are vectors and bras are linear functionals, then \braket xy can just be \emph{defined} as the application $\bra x\, \ket y$ in any vector space $V$, rather than presumed to exist.
Does the inner product contribute anything?

\noindent\textbf{A}\footnote{The authors speaking one at a time}:
It does.
It provides a particular embedding $\ket x \mapsto \bra x$ of our vector space \H\ to the vector space of linear functionals
on \H, which is the dual of \H\ in the theory of vector spaces.
Notice that the axiom $\braket xx \ge 0$ makes good sense for inner product spaces but not if we have no connection between \ket x and \bra x.
If \H\ is a Hilbert space, in particular if \H\ is finite dimensional, this embedding is an isomorphism.

\Qn Thanks.
Let me proceed to my main question.
As I said, kets are vectors according to Dirac, and the point of view that kets are vectors is ubiquitous.
Here is a quote from my favorite textbook on quantum computing: ``The notation \ket{\cdot} is used to indicate that the object is a vector'' \cite[p.~62]{NC}.

But recently I watched a recorded lecture \cite{Werner} by Reinhard Werner, a professor of physics, who defined \ket y as the linear function.

$a\mapsto a\cdot y$ from scalars to vectors, so that $y$ is $\ket y(1)$.

\Ar Did Prof.\ Werner compare the definitions?

\Qn No, there was only one definition of kets in his lecture%
\footnote{We asked Prof. Werner about the source of his definition.
He did not remember it.
However, the definition matches the spirit of category theory.
There, instead of dealing with the elements of a structure, one deals with maps into (or out of) the structure.}.
But I think that the outer product \ketbra yx demonstrates an advantage of his approach.
In the traditional approach, \ketbra yx is defined to be a (linear)
function by fiat.
For example, Nielsen and Chuang \cite[p.~68]{NC} write:
\begin{quoting}[vskip=1pt,leftmargin=\parindent]
``Suppose \ket v is a vector in an inner product space $V$, and \ket w is a vector in an inner product space $W$.
Define \ketbra wv to be the linear operator from $V$ to $W$ whose action is defined by
$ \big(\ketbra wv\big)(\ket{v'}) \equiv \ket w\,\braket v{v'}
 = \braket v{v'}\, \ket w $.''
\end{quoting}
I do not know how consistent Prof.\ Werner is in using his definition
of kets, but for the purpose of this discussion let me introduce a
purely functional approach where a ket \ket y is always the function $a\mapsto a\cdot y$.
In this approach, \ketbra yx is simply the composition $\ket y \circ
\bra x$ of two linear functions and thus \ketbra yx is naturally a
linear function, the same function that Nielsen and Chuang define.

At last, I come to my question: Is a ket a vector or a function?
It cannot be both, can it?

\Ar Well, abuse of notation is common in mathematics, especially if the meaning is obvious from the context.

\Qn Mathematically, the purely functional approach is attractive.
Since composition of functions is associative, we can drop the parentheses in expressions like
\[ \ket u\, \bra v\, \ket w\, \bra x\, \ket y\, \bra z. \]
But the purely functional approach has some problems.
Consider the scalar product \braket xy for example.
The composition $\bra x \circ \ket y$ is a linear operator on \C, not a scalar.
We can define, by fiat, that the original \braket xy is, in the purely functional approach, $\left(\bra x \circ \ket y\right)(1)$.
It would be more natural, of course, to view \ket y as a vector in the context of scalar product.

I wonder whether one can take advantage of both approaches even if this leads to abuse of notation.
Maybe there is a resolution of this abuse so that the intended meaning is obvious from the context.

\Sec{Dirac terms}\label{s:terms}

\Ar It seems that there are such resolutions.
We need some analysis to understand what is going on.
This discussion may be more pedantic than usual.

To keep the notation simple, we restrict attention to kets and bras over the same inner-product space \H.
For the purpose of the analysis, we put forward the following
tentative convention.

\paragraph{Tentative Convention.}
Every occurrence of a ket is marked as a \emph{vector ket} or a \emph{function ket}.
If $y$ is a vector then the vector ket \ket y denotes the vector $y$ but the corresponding function ket \ket y denotes the function $a\mapsto a\cdot y$.
More generally, for any label $L$, the vector ket \ket L denotes some vector \v\ in \H, while the function ket \ket L denotes the function $a \mapsto a \cdot \v$, from \C\ to \H, for the same vector \v. \qef

\Qn You can avoid marking by declaring that, by default, kets are function kets; the corresponding vector ket is \ket L(1).

\Ar This is correct.
We stick to marking because of its symmetry.
Your proposal may give an impression of bias in favor of function kets.

You have mentioned inner products \braket xy and outer products \ketbra yx.
Let's consider more general products of alternating kets and bras.

\paragraph{Dirac terms: syntax.}
Kets and bras are \emph{Dirac characters}.
By a \emph{Dirac term\footnotemark} we mean a nonempty sequence of Dirac characters where kets and bras alternate;
the sequence is often furnished with parentheses.

\footnotetext{While logicians speak about terms, computer scientists speak about expressions. Here we use the logicians' vocabulary for a very utilitarian reason: ``term'' is shorter than ``expression.''}

More formally, Dirac terms are defined inductively.
Dirac characters are terms.
A concatenation $s_1s_2$ of Dirac terms $s_1, s_2$ where  kets and bras alternate is a Dirac term.

By default, a Dirac term $s$ comes with enough parentheses to parse $s$, i.e.\ to determine how $s$ is constructed  from kets and bras by means of concatenation.

\Qn Suppose some ket \ket L occurs more than once in a Dirac term.
Can some occurrences of \ket L be vector kets and some function kets?

\Ar Sure, why not?

\paragraph{Dirac terms: semantics.}\mbox{}

\Qn Since kets are disambiguated in the Tentative Convention, semantics seems obvious.

\Ar It is obvious, but we need to spell out details in order to pursue our analysis.

By induction, we define the intended values \val{s} of (or denoted by) Dirac terms $s$ and check that the the following equivalences hold.

\begin{enumerate}[itemsep=2pt,leftmargin=25pt,parsep=0pt,topsep=5pt]
\item[E1]\ \val{s} is a linear function with domain \H\ $\iff$ $s$ ends with a bra, and
\item[]\ \val{s} is a linear function with domain \C\ $\iff$ $s$ ends with a function ket.
\item[E2]\ \val{s} is a vector or a scalar $\iff$ $s$ ends with a vector ket.
\item[E3]\ \val{s} is a scalar or a scalar-valued function $\iff$ $s$ starts with a bra.
\item[E4]\ \val{s} is a vector or a vector-valued function $\iff$ $s$ starts with a ket.
\end{enumerate}

The value \Val{\bra x} of a bra \bra x is the linear $\H\to\C$ function denoted by \bra x.
The ket values are described in the Tentative Convention above.
The equivalences E1--E4 are obvious in these cases.

Let $s$ be a concatenation $s_1 s_2$ of constituent subterms (which satisfy E1--E4 of course), and let $V_1, V_2$ be the values of $s_1, s_2$ respectively.
Four cases arise depending on whether $V_1, V_2$ are functions or not.

\begin{enumerate}[itemsep=0pt,leftmargin=30pt,parsep=0pt,topsep=5pt]
\item[FF]\ If $V_1$ is a function and $V_2$ is a function, then \val{s} is the composition $V_1 \circ V_2$, so that \val{s} is a function whose domain is that of $V_2$ and
    \[
    \val{s}(A) = \big(V_1\circ V_2\big)(A) = V_1(V_2(A)) \qquad
    \text{for all }A\in\Dom{V_2}.
   \]

\item[FN]\ If $V_1$ is a function but $V_2$ is not then
    \[ \val{s} = V_1(V_2). \]

\item[NF]\ If $V_1$ isn't a function but $V_2$ is then \val{s} is the function $V_1\cdot V_2$ so that \val{s} is a function whose domain is that of $V_2$ and
    \[\val{s}(A) = \big(V_1 \cdot V_2\big)(A) = V_1 \cdot (V_2(A))\qquad
    \text{for all }A\in\Dom{V_2}. \]

\item[NN]\ If neither $V_1$ nor $V_2$ is a function then
    \[ \val{s} = V_1\cdot V_2. \]
\end{enumerate}

\Qn I see that you overload the multiplication symbol $\cdot$ with different types.

\Ar We do. But notice that at least one of the factors is always a scalar.
It is very common to multiply scalars, vectors, and linear functions by scalars.
But let's check that our definitions make sense and that the equivalences E1--E4 hold.

\medskip
Clauses FF and FN make sense in that $V_2$ belongs to or takes values in \Dom{V_1}.
Indeed, by E1, $s_1$ ends with a bra or a function ket.
If $s_1$ ends with a bra, then $\Dom{s_1} = \H$ by E1 for $s_1$, and $s_2$ starts with a ket, and the desired property follows from E4 for $s_2$.
If $s_1$ ends with a function ket, then $\Dom{s_1} = \C$ by E1 for $s_1$, and $s_2$ starts with a bra, and the desired property follows from E3 for $s_2$.

Clauses NF and NN also make sense, which is obvious if $V_1$ is a scalar.
Otherwise $V_1$ is a vector and it suffices to check that $V_2$ is a scalar or scalar-valued function.
By E2, $s_1$ ends with a ket.
Hence $s_2$ starts with a bra.
Use E3 for $s_2$.

It remains to check that \val{s} satisfies the equivalences E1--E4.
By FF--NN, \val{s} is a function if and only if $V_2$ is a function, and if \val{s} is a function then its domain is \Dom{V_2}, and if \val{s} is not a function then it is a vector or scalar.
And of course $s,s_2$ share the final character.
Hence E1 and E2 hold.

Further, by FF--NN, \val{s} is a vector or vector-valued function if and only if $V_1$ is so.
It follows that \val{s} is a scalar or scalar-valued function if and only if $V_1$ is so.
And of course $s,s_1$ share the first character.
Hence E3 and E4 hold.

\Qn Let me just note that your clause NF generalizes the definition of outer product \ketbra wv quoted in \S\ref{s:qn} provided that the vector spaces $V$ and $W$ coincide with \H.

\Sec{Associativity}

\Ar It turns out that parentheses are unnecessary in Dirac terms because the partial operation
\[ \val{s_1} * \val{s_2} =  \val{s_1 s_2} \]
on the values of Dirac terms is associative.
The operation is defined if the concatenation $s_1 s_2$ is a Dirac term, i.e.\ if  kets and bras alternate in $s_1 s_2$.

\Qn What if $\val{s_i} = \val{t_i}$?
Will we have that $\val{s_1 s_2} = \val{t_1 t_2}$?

\Ar Yes, because the concatenation clauses in the definition of
\val{s_1 s_2} are formulated in terms of \val{s_1} and \val{s_2},
without examining the terms $s_1$ and $s_2$.

\begin{lemma}\label{l:ass}
The partial operation $*$ is associative.
In other words, let $s_1, s_2, s_3$ be Dirac terms such that kets and bras alternate in the concatenation $s_1 s_2 s_3$ and let $V_1, V_2, V_3$ be the values of $s_1, s_2, s_3$ respectively. Then
\begin{equation}\label{ass}
(V_1 * V_2) * V_3 = V_1 * (V_2 * V_3)
\end{equation}
\end{lemma}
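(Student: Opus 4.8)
The plan is to \emph{linearize} every value so that all four clauses FF--NN collapse into a single operation, namely composition. Define a map $\Phi$ that sends each function value to itself and sends each non-function value $X$ (a vector or a scalar) to the linear map $a\mapsto a\cdot X$ with domain \C, so that $\Phi(X)(1)=X$ for non-functions while $\Phi$ is the identity on functions. First I would note that both groupings in \eqref{ass} are built from the \emph{same} two junctions: $V_1*V_2=\val{s_1s_2}$ shares its final character with $s_2$ and its initial character with $s_1$, and likewise $V_2*V_3=\val{s_2s_3}$; hence the outer $*$ is governed by the junction $s_1\mid s_2$ in the left grouping and by $s_2\mid s_3$ in the right grouping, the two junctions merely being combined in opposite orders.

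The heart of the argument is the identity
\[
\Phi(V*W)=\Phi(V)\circ\Phi(W),
\]
verified clause by clause. Case FF is immediate, since the value is already $V\circ W$ and $\Phi$ fixes functions. In case NF one computes $\Phi(V)\circ\Phi(W)=\big(A\mapsto W(A)\cdot V\big)=V\cdot W$, using that $W$ is scalar-valued and that scalars commute; case NN is the same computation with $W$ a scalar. The crucial clause is FN: there $\Phi(V)\circ\Phi(W)$ sends $a$ to $V(a\cdot W)=a\cdot V(W)$, and the inner equality is precisely the \emph{linearity} of $V$. In every clause the composite type-checks (the codomain of $\Phi(W)$ equals the domain of $\Phi(V)$) for exactly the reasons already recorded when we checked that FF--NN ``make sense.''

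Granting this identity, associativity is automatic: both $\Phi\big((V_1*V_2)*V_3\big)$ and $\Phi\big(V_1*(V_2*V_3)\big)$ unfold to $\Phi(V_1)\circ\Phi(V_2)\circ\Phi(V_3)$, and composition of functions is associative. It then remains to descend from equality of $\Phi$-images to equality of the values. For this I would invoke the propagation facts from the semantics section: a product of Dirac values is a function exactly when its last factor is, with the same domain, and it is scalar- or vector-valued according to its first factor. Applied to both groupings, these show that $(V_1*V_2)*V_3$ and $V_1*(V_2*V_3)$ have the same type --- both functions with the same domain and codomain, or both non-functions of the same kind. On any single such type $\Phi$ is injective (it is the identity on functions, and on a non-function $X$ one recovers $X=\Phi(X)(1)$), so $\Phi(\mathrm{LHS})=\Phi(\mathrm{RHS})$ forces $\mathrm{LHS}=\mathrm{RHS}$.

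I expect the obstacle to be twofold. First, the FN clause is where the proof genuinely rests on linearity, and getting the scalar on the correct side ($V(a\cdot W)=a\cdot V(W)$, not the reverse) is the one subtle computation. Second, one must resist treating $\Phi$ as globally injective: it deliberately identifies a vector with the corresponding domain-\C\ function ket, so the descent step is legitimate \emph{only after} confirming that the two groupings land in the same type --- which is exactly why the propagation facts from the semantics section are indispensable rather than cosmetic.
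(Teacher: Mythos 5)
Your proof is correct, but it takes a genuinely different route from the paper's. The paper argues by direct case analysis: it first examines $V_3$ (factoring out scalars, reducing scalar-valued and vector-valued functions to the pointwise statement \eqref{ass2}, so that the only remaining case has $V_3$ a vector and $V_2$ a function with domain \H), then examines $V_1$, and finishes with two short explicit computations using FF, FN and NF. You instead embed every value into the world of linear maps via $\Phi$ --- which is exactly the passage from a vector ket to the corresponding function ket, extended to scalars and left as the identity on functions --- prove the homomorphism identity $\Phi(V*W)=\Phi(V)\circ\Phi(W)$ clause by clause, reduce associativity of $*$ to associativity of composition, and then descend using the type-propagation facts (a product is a function iff its last factor is, and is vector- or scalar-valued according to its first factor). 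Your argument is sound, including the delicate points: the FN clause really is where linearity enters, and your insistence that $\Phi$ is only injective type-by-type, so that one must first check both groupings have the same type, is exactly right. One small item to make explicit: in clauses NF and NN you need the second factor to be scalar-valued even when $V_1$ is a \emph{scalar}, whereas the paper's ``makes sense'' discussion spells this out only for $V_1$ a vector; but the paper's own reasoning (E2 for $s_1$, hence $s_2$ starts with a bra, hence E3 for $s_2$) applies verbatim to the scalar case, so this is a presentational gap, not a mathematical one. What your approach buys is uniformity and an explanation of \emph{why} associativity holds ($*$ is secretly composition), and it foreshadows the Robustness Lemma, since $\Phi$ of a vector-ket value is precisely the function-ket value. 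What the paper's approach buys is self-containedness: no auxiliary map, no injectivity-and-descent step, just FF--NN applied directly.
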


\begin{proof}
First examine $V_3$.
If $V_3$ is a scalar, factor it out of the equation, so that \eqref{ass} becomes obvious.
Similarly, if $V_3$ is a scalar-valued function, then
\begin{equation}\label{ass2}
(V_1 * V_2) * V_3(A)= V_1 * (V_2 * V_3(A)) \qquad
\text{holds for every }A\in\Dom{V_3}
\end{equation}
because the scalar $V_3(A)$ can be factored out of the equation.
If $V_3$ is a vector-valued function, it suffices to prove \eqref{ass2}
because it implies \eqref{ass}.
In order to prove \eqref{ass2} for vector-valued functions, it suffices to prove \eqref{ass} for the case where $V_3$ is a vector.
In this case, by E4, $s_3$ starts with ket, $s_2$ ends with a bra, and therefore, by E1, $V_2$ is a function with domain \H.

Next examine $V_1$.
If $V_1$ is a scalar, factor it out, and then \eqref{ass} is obvious.
If $V_1$ is a function then, using FF and FN in \S\ref{s:terms}, we have:
\begin{align*}
(V_1 * V_2) * V_3 &= \big(V_1 \circ V_2\big)(V_3) = V_1(V_2(V_3))\\
V_1 * (V_2 * V_3) &= V_1 * (V_2(V_3)) =  V_1(V_2(V_3))
\end{align*}
It remains to prove \eqref{ass} in the case where $V_1, V_3$ are vectors and $V_2$ is a function with domain \H.
Since $V_1$ is a vector, $s_1$ ends with a ket by E2, so that $s_2$ starts with a bra and $V_2$ is a scalar-valued function by E3.
We have
\begin{align*}
(V_1 * V_2) * V_3 &= \big(V_1\cdot V_2\big)(V_3)
 = V_1\cdot V_2(V_3),\\
V_1 * (V_2 * V_3) &= V_1 * (V_2(V_3)) = V_1\cdot V_2(V_3).
\qedhere
\end{align*}
\end{proof}

\Sec{Resolution}

\Qn What does the associativity buy you?

\Ar It allows us to prove a certain robustness phenomenon which can be illustrated on the example where
\[
  s = \big(\braket xy\big)\bra u
\]
Let \ket v be an arbitrary vector in \H\ and $c,d$ be the scalar products \braket xy and \braket uv, respectively.
If \ket y as
\blue{is}
a vector ket in $s$ then, by value, i.e., writing terms instead of their values (as it is commonly done)
\[
s\ket v = \Big(\big(\braket xy\big)\bra u\Big)\ket v =
  \big(c\cdot \bra u)\ket v =
  c\cdot d.
\]
If \ket y is a function ket in $s$ then $\bra x \circ \ket y$ is the the operation of multiplying by $c$, and so (again by value) we have
\[
s\ket v = \Big(\bra x\, \ket y\, \bra u\Big)\ket v =
\Big(\bra x \circ \ket y \circ \bra u\Big)\ket v =
\Big(\bra x \circ \ket y\Big) \cdot d = c\cdot d,
\]
getting exactly the same result.

\begin{lemma}[Robustness]
Let $s$ be a Dirac term and let a ket \ket y occur in a particular non-final position in sequence $s$.
\val{s} is the same whether (the occurrence of) \ket y in that position is a vector ket or a function ket.
\end{lemma}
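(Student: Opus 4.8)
The plan is to reduce the claim to a single two-character subterm and then invoke associativity (Lemma~\ref{l:ass}). Since the distinguished occurrence of \ket y is non-final and kets and bras alternate, the character immediately following it must be a bra, say \bra u. So I would write $s = \alpha\,(\ket y\,\bra u)\,\beta$, where $\alpha$ and $\beta$ are (possibly empty) Dirac terms that do not involve this occurrence of \ket y. The entire difference between the two markings is confined to the factor \ket y\,\bra u, so it suffices to show that $\val{\ket y\,\bra u}$ is the same whether this \ket y is a vector ket or a function ket.

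Next I would compute $\val{\ket y\,\bra u}$ in both cases using the concatenation clauses of \S\ref{s:terms}. Here $V_2 = \Val{\bra u}$ is the linear $\H\to\C$ function $z\mapsto\braket uz$. If \ket y is a vector ket, its value $V_1 = y$ is not a function, so clause NF applies and $\val{\ket y\,\bra u}$ is the function $z \mapsto y\cdot\braket uz$ with domain \H. If \ket y is a function ket, then $V_1$ is the function $a\mapsto a\cdot y$, so clause FF applies, giving $V_1\circ V_2\colon z\mapsto V_1(\braket uz) = \braket uz\cdot y$, again with domain \H. Since $y\cdot\braket uz = \braket uz\cdot y$, the two functions coincide; hence $\val{\ket y\,\bra u}$ is marking-independent.

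Finally I would assemble the pieces. By the associativity of $*$, the value \val{s} is the $*$-product $\val{\alpha}*\val{\ket y\,\bra u}*\val{\beta}$ of the three constituent values in order, independently of how it is parenthesized (omit \val\alpha\ or \val\beta\ when the corresponding term is empty). Now \val\alpha\ and \val\beta\ are unaffected by the marking of the chosen occurrence, and $\val{\ket y\,\bra u}$ was just shown to be marking-independent, so \val{s} is the same under both markings.

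The argument is short once the right localization is found, so the only real obstacle is conceptual: one must recognize that the discrepancy between a vector ket and a function ket for the same vector is exactly ``one scalar slot,'' and that this slot is absorbed precisely by composing with the following bra. This also explains why the non-final hypothesis is essential: a final function ket has value a function $\C\to\H$ while a final vector ket is a vector (compare E1 and E2), so with no following bra to reconcile them the two values genuinely differ.
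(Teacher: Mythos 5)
Your proof is correct, but it localizes the problem differently from the paper's. The paper first settles the case where \ket y is the \emph{initial} character of $s$: there it groups \ket y with the entire suffix $s_2$ to its right, notes via E3 that $\val{s_2}$ must be a scalar or a scalar-valued function, and checks in each of those two cases that the NF/NN clauses (vector ket) and the FF/FN clauses (function ket) produce the same value; the general case $s=s_1\ket y s_2$ is then dispatched by one appeal to Lemma~\ref{l:ass}, regrouping as $s_1(\ket y s_2)$ and citing the initial-character case. You instead group \ket y with only the single bra \bra u that must immediately follow it, and verify directly that $\Val{\ket y\,\bra u}$ is the same $\H\to\H$ function under either marking (NF gives $z\mapsto y\cdot\braket uz$, FF gives $z\mapsto\braket uz\cdot y$); the rest is generalized associativity plus the observation, made explicitly in the paper before Lemma~\ref{l:ass}, that $*$ depends only on the values of the constituents. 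Your decomposition is tighter and exposes the mechanism --- the adjacent bra is precisely what absorbs the one ``scalar slot'' distinguishing the two readings of \ket y --- and your closing remark about why the non-final hypothesis is essential is a genuine bonus the paper does not state. The paper's version avoids any case analysis on where \ket y sits relative to a specific neighbor by leaning on E3, at the cost of a two-case computation on the nature of $\val{s_2}$. Both arguments stand on the same two pillars: Lemma~\ref{l:ass} (extended in the standard way to arbitrary re-bracketings) and the value-only formulation of the concatenation clauses FF--NN.
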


\begin{proof}
First suppose that \ket y is the first character in $s$.
By Lemma~\ref{l:ass}, we may assume that $s$ is a concatenation of \ket y and some Dirac term $s_2$. Let $V, V_2$ be the values of \ket y and $s_2$ respectively.
By E3, $V_2$ is a scalar or a scalar-valued function.
Recall that there is a vector \v\ such that $y$ is a marked version of \v.
If \ket y is a vector ket then $V = \v$, and if \ket y is a function ket it is the function $V(a) = a \cdot \v$.

If $V_2$ is a scalar then, by NF--NN in \S\ref{s:terms},
\[ \val{s} =
\begin{cases}
  V(V_2) = V_2\cdot\v\
    &\text{if \ket y is a function ket,}\\
  V\cdot V_2 = \v\cdot V_2 = V_2 \cdot \v\
    &\text{if \ket y is a vector ket.}
\end{cases}
\]
Similarly, if $V_2$ is a scalar-valued function then, by FF and FN, for every argument $A$ of $V_2$, we have
\[ \val{s}(A) =
\begin{cases}
  V(V_2(A)) = V_2(A)\cdot\v\
    &\text{if \ket y is a function ket,}\\
  V\cdot V_2(A) = \v\cdot V_2(A) = V_2(A) \cdot \v\
    &\text{if \ket y is a vector ket.}
\end{cases}
\]
This completes the proof in the case that \ket y is at the beginning
of $s$.

Now suppose that $s = s_1 \ket y s_2$ where $s_1, s_2$ are Dirac terms with values $V_1, V_2$.
By Lemma~\ref{ass}, we may assume that $s$ is the concatenation of $s_1$ and $\ket y s_2$, so that \val{s} is determined by $V_1$ and \Val{\ket y s_2}.
By the first part of the proof, \Val{\ket y s_2} does not depend on
how the ket \ket y is marked. It follows that \val{s} does not depend on how \ket y is marked.
\end{proof}

Now we drop the Tentative Convention of \S\ref{s:terms}.
The kets are not marked anymore.
One should be able to tell from the context whether a ket denotes a vector or a function.

\Qn By the robustness lemma, we have a whole spectrum of possible resolutions of the abuse of notation in question.

\Ar One natural resolution is to view kets as function kets where possible:

\textit{In a Dirac term, an occurrence of a ket is viewed as a vector if and only if it is the final character in the term.}

\Qn The direct opposite strategy is to view an occurrence of a ket as a function if and only if it is the first character in the term.
I'm kidding.

\Ar Actually, a close relative of your strategy works: View an occurrence of a ket as a function if and only if it is the first character and the last character is a bra.

\Qn Explain.

\Ar If the first character is a bra or if the last character is a ket, then the given Dirac term has the form
\[ \bra{x_1}\,\ket{y_1}\,\dots\,\bra{x_n}\,\ket{y_n},\quad
\bra{x_1}\,\ket{y_1}\,\dots\,\bra{x_n}\,\ket{y_n}\,\bra{x_0},\quad
\text{ or }\quad \ket{y_0}\,\bra{x_1}\,\ket{y_1}\,\dots\,\bra{x_n}\,\ket{y_n}. \]
Pair up every \bra{x_i} with \ket{y_i} and let $c = \prod_i \braket{x_i}{y_i}$.
Every ket is viewed as a vector, and you get $c$, $c\cdot\bra{x_0}$, or $c\cdot\ket{y_0}$ respectively.

If the first character is a ket and the last character is a bra, the term has the form
\[\ket{y_0}\,\bra{x_1}\,\ket{y_1}\,\dots\,\bra{x_n}\,\ket{y_n}\,%
\bra{x_0} . \]
Pair up \bra{x_1}, \dots, \bra{x_n} with \ket{y_1}, \dots, \ket{y_n} respectively and let $c = \prod_{i=1}^n \braket{x_i}{y_i}$.
You get $c\cdot\ketbra{y_0}{x_0} = c\cdot \ket{y_0} \circ \bra{x_0}$.

\end{document}